%
%
%
%
%
%
%

\documentclass[nofootinbib,
reprint,prd,nopacs,
nobibnotes,
amsmath,amssymb,
aps,
]{revtex4-1}

\usepackage{graphicx}
\usepackage{dcolumn}
\usepackage{bm}


\usepackage{amsmath}
\usepackage{graphicx}
\usepackage{amsfonts}
\usepackage{latexsym}
\usepackage{bbold}
\usepackage{wasysym}
\usepackage{calligra}
\usepackage{float}
\usepackage{ulem}
\usepackage{inputenc}
\usepackage{xspace}
\usepackage{url}
\usepackage{epstopdf}
\usepackage{tikz}
\usepackage{amsthm}

\usepackage{inputenc}

\newtheorem{mydef}{Definition}
\newtheorem{cor}{Corollary}

\newtheorem{prop}{Proposition}
\newtheorem{theorem}{Theorem}

\newcommand{\be}{\begin{equation}}
\newcommand{\ee}{\end{equation}}
\newcommand{\beq} {\begin{equation}}
\newcommand{\eeq} {\end{equation}}
\newcommand{\ba}{\begin{eqnarray}}
\newcommand{\ea}{\end{eqnarray}}

\begin{document}

	\title{Linear Transformations on Affine-Connections}

\author{Damianos Iosifidis}
\affiliation{Institute of Theoretical Physics, Department of Physics
	Aristotle University of Thessaloniki, 54124 Thessaloniki, Greece}
\email{diosifid@auth.gr}

	\begin{abstract}
	
	We state and prove a simple Theorem that allows one to generate invariant quantities in Metric-Affine Geometry, under a given transformation of the affine connection. We start by a general functional of the metric and the connection and consider transformations of the affine connection possessing a certain symmetry. We show that the initial functional is invariant under the aforementioned group of transformations iff its $\Gamma$-variation produces tensor of a given symmetry. Conversely if the tensor produced by the  $\Gamma$-variation of the functional respects a certain symmetry then the functional is invariant under the associated transformation of the affine connection. We then apply our results in Metric-Affine Gravity and produce invariant actions under certain transformations of the affine connection. Finally, we derive the constraints put on the hypermomentum for such invariant Theories.
	
\end{abstract}

\maketitle

\allowdisplaybreaks



\section{Introduction}
\label{intro}

The geometric structure of a  manifold, of a generic dimension $n$, is specified once a metric $g$ and an affine connection $\nabla$ are given. Then, the manifold is denoted as ($\mathcal{M},g,\nabla$) and in general possesses curvature, torsion and non-metricity. Setting both torsion and non-metricity to zero, one recovers the so-called Riemannian Geometry\cite{eisenhart1997riemannian} which is completely characterized by the metric alone \footnote{As it is well known, in this case the connection is the Levi-Civita and it is completely  determined by the metric tensor and its derivatives.}. It is also possible to impose vanishing curvature and non-metricity and obtain the well known teleparallel scheme sourced by torsion\cite{aldrovandi2012teleparallel}. Yet another possibility is to impose vanishing curvature and torsion and allow only for non-metricity to arrive to a symmetric teleparallel geometry\cite{nester1998symmetric,jimenez2018teleparallel}. More generally, imposing only vanishing curvature one obtains a generalized teleparallel geometry admitting both torsion and non-metricity\cite{jimenez2019general}.
 Letting all the above geometrical properties of the manifold unconstrained  we are in the realm of a non-Riemannian Geometry\cite{eisenhart2012non}. The corresponding Gravity Theory that is described by such a generalized geometry is called Metric-Affine Gravity(MAG)\cite{hehl1995metric,hehl1999metric,iosifidis2019metric,iosifidis2019exactly} and admits both torsion and non-metricity along with curvature.

Metric-Affine Gravity provides a well motivated and important generalization of General Relativity, where the intrinsic properties of matter like spin, shear and dilation contribute to the Gravitational field through the hypermomentum tensor\cite{hehl1976hypermomentum}. In the same manner that the energy momentum tensor induces curvature, the hypermomentum induces  torsion and non-metricity which setup a non-Riemannian arena for Gravity. Interesting models of matter with hypermomentum (hyperfluids) have been proposed in \cite{obukhov1993hyperfluid,obukhov1996model} and their Cosmological implications are well worth investigating. We should note that if the Gravitational action is invariant under a certain transformation of the affine connection, this invariance gives restrictions on the hypermomentum. The most common example of the later being the projective invariance of the Einstein-Hilbert action \cite{hehl1995metric,vitagliano2011dynamics,bernal2017non,alfonso2017trivial,iosifidis2019exactly} which implies that the hypermomentum has an identically vanishing trace  when contracted in its first two indices. Therefore we see that connection transformations are very interesting and Gravitational actions that are invariant under certain groups of transformations of the affine connection will have associated matter actions that should also respect that symmetry. It is the purpose of this article to study such connection transformations and their relevance with functional invariance. More specifically, we state and prove a Theorem that relates the invariance of a functional, under certain transformations of the affine connection, with the symmetries of the tensor obtained by the $\Gamma$-variation of the functional.

The paper is organized as follows. We first define the basic geometrical objects of a non-Riemannian Geometry. Then we discuss general and special transformations of the affine connection and introduce the nomenclature and the definitions that we are going to be using throughout. We then state and prove our Theorem and derive three  corollaries that follow immediately. Finally, we discuss the applications of the Theorem in Metric-Affine Gravity and we show how it can be used to derive invariant Gravity actions under connection transformations. In particular we reproduce the results for the projective invariant quadratic (in torsion and non-metricity) action of \cite{iosifidis2019scale} and we also derive the constraints for an enhanced invariance. Considering a general class of MAG Theories we see how the invariance of the Gravity action under certain transformations of the connection, imposes constraints on the sources and more specifically on the form of the hypermomentum.  We then conclude our results and also discuss other possible applications.

\section{Geometrical Objects}
The structure of a manifold is determined by the metric tensor $g_{\mu\nu}$ which measures distances, defines dot products and raises and lowers indices and also by the affine connection $\Gamma^{\lambda}_{\;\;\;\mu\nu}$ (oftentimes denoted just by $\nabla$) which defines the parallel transfer of tensor fields through covariant differentiation. In our conventions the covariant derivative of, say, a $(1,1)$ type tensor reads
\beq
\nabla_{\alpha}T^{\mu}_{\;\;\nu}=\partial_{\alpha}T^{\mu}_{\;\;\nu}-\Gamma^{\lambda}_{\;\;\;\nu\alpha}T^{\mu}_{\;\;\lambda}+\Gamma^{\mu}_{\;\;\;\lambda\alpha}T^{\lambda}_{\;\;\nu}
\eeq
Continuing, we define the torsion tensor as the antisymmetric part of the affine connection
\beq
S_{\mu\nu}^{\;\;\;\lambda}:=\Gamma^{\lambda}_{\;\;\;[\mu\nu]}
\eeq
which naturally arises by acting the antisymmetrized covariant derivative on a scalar
\beq
\nabla_{[\mu}\nabla_{\nu]}\phi=S_{\mu\nu}^{\;\;\;\lambda}\nabla_{\lambda}\phi
\eeq
Letting the latter operator act on a vector $u^{\mu}$ we obtain
\begin{equation}
[\nabla_{\alpha} ,\nabla_{\beta}]u^{\mu}=2\nabla_{[\alpha} \nabla_{\beta]}u^{\mu}=R^{\mu}_{\;\;\;\nu\alpha\beta} u^{\nu}+2 S_{\alpha\beta}^{\;\;\;\;\;\nu}\nabla_{\nu}u^{\mu}
\end{equation}
where
\begin{equation}
R^{\mu}_{\;\;\;\nu\alpha\beta}:= 2\partial_{[\alpha}\Gamma^{\mu}_{\;\;\;|\nu|\beta]}+2\Gamma^{\mu}_{\;\;\;\rho[\alpha}\Gamma^{\rho}_{\;\;\;|\nu|\beta]}
\end{equation}
is the so-called Riemann or Curvature tensor and the horizontal bars around an index denote that this index is left out of the (anti)-symmetrization. In non-Riemannian Geometries the only symmetry of the curvature tensor is antisymmetry in its last two indices as it is obvious by its definition.  Without the use of any metric, we can construct the  two independent contractions
\beq
R_{\nu\beta}:=R^{\mu}_{\;\;\;\nu\mu\beta}\;,\; \hat{R}_{\alpha\beta}:=R^{\mu}_{\;\;\;\mu\alpha\beta}
\eeq
The former defines as usual the Ricci tensor while the latter is frequently refereed to as homothetic curvature and is of purely non-Riemannian origin. Once a metric is given\footnote{There are also the purely affine models where no notion of a metric is required\cite{eddington1923mathematical,eddington1921generalisation}. For applications of the later to inflation the reader may consult \cite{azri2017affine} and references therein.} we can form yet another  contraction
\beq
\check{R}^{\lambda}_{\;\;\kappa}:=R^{\lambda}_{\;\;\mu\nu\kappa}g^{\mu\nu}
\eeq
However, the Ricci scalar is still uniquely defined since
\beq
R:=R_{\mu\nu}g^{\mu\nu}=-\check{R}_{\mu\nu}g^{\mu\nu}\;,\;\; \hat{R}_{\mu\nu}g^{\mu\nu}=0
\eeq

Going back to the torsion tensor, by contraction in the last two indices we may define the torsion vector
\beq
S_{\mu}:=S_{\mu\lambda}^{\;\;\;\;\lambda}
\eeq
which exists for arbitrary space dimension-$n$. For $n=4$ in particular we can also define the torsion pseudo-vector
\beq
\tilde{S}_{\mu}:=\epsilon_{\mu\alpha\beta\gamma}S^{\alpha\beta\gamma}
\eeq
where  $\epsilon_{\mu\alpha\beta\gamma}$ is the $d-dimensional$ totally antisymmetric Levi-Civita tensor. Now, for generic geometries the metric tensor is not covariantly conserved, and exactly this incompatibility\footnote{If $\nabla_{\alpha}g_{\mu\nu}=0 $ for some connection $\Gamma^{\lambda}_{\;\;\;\mu\nu}$ the metric is said to be compatible with the connection.} defines the non-metricity tensor
\beq
Q_{\alpha\mu\nu}=-\nabla_{\alpha}g_{\mu\nu}
\eeq
We can then contract the above in two independent ways, to get the two non-metricity vectors
\beq
Q_{\alpha}:=Q_{\alpha\mu\nu}g^{\mu\nu}\;,\;\; q_{\nu}=Q_{\alpha\mu\nu}g^{\alpha\mu}
\eeq
The former goes by the name Weyl vector, while the latter does not seem to have a particular name in the literature. The general non-Riemannian space that has all of its geometrical objects unconstrained will be denoted as $L_{n}$. In what follows we consider connection transformations and expand on the tools we are going to use for the Theorem.

\section{Linear Transformations of Affine Connection}
A general shift transformation of the affine connection 
\beq
\Gamma^{\lambda}_{\;\;\;\mu\nu} \longrightarrow \Gamma^{\lambda}_{\;\;\;\mu\nu} \label{affcongen} +N^{\lambda}_{\;\;\mu\nu}
\eeq
where $N^{\lambda}_{\;\;\mu\nu}$ is a type $(1,2)$ tensor, induces changes on the various geometrical objects of an $L_{n}$. For instance, torsion and non-metricity change according to
\beq
S_{\mu\nu\alpha} \longrightarrow S_{\mu\nu\alpha}+N_{\alpha[\mu\nu]}
\eeq
\beq
Q_{\nu\alpha\mu}\longrightarrow Q_{\nu\alpha\mu}+2N_{(\alpha\mu)\nu}
\eeq
respectively. Two immediate results of the above are the following; any transformation of the form $(\ref{affcongen})$ with $N_{\alpha[\mu\nu]}=0$ preserves torsion and any transformation  of the form $(\ref{affcongen})$ with $N_{(\alpha\mu)\nu}=0$ preserves non-metricity. It is also obvious there there is no transformation of the affine connection that preserves both torsion and non-metricity at the same time. In addition, under the connection transformation $(\ref{affcongen})$ the Riemann tensor transforms as
\begin{gather}
R^{\mu}_{\;\;\nu\alpha\beta} \longrightarrow R^{\mu}_{\;\;\nu\alpha\beta} +2\nabla_{[\alpha}N^{\mu}_{\;\;\;|\nu|\beta]} \\ \nonumber
-2S_{\alpha\beta}^{\;\;\;\;\lambda}N^{\mu}_{\;\;\nu\lambda}+2 N^{\mu}_{\;\;\;\lambda [\alpha}N^{\lambda}_{\;\;\;|\nu|\beta]}
\end{gather}
where $\nabla_{\mu}$ and $S_{\alpha\beta}^{\;\;\;\;\lambda}$ are the covariant derivative and the torsion tensor computed with respect to the initial connection $\Gamma^{\lambda}_{\;\;\;\mu\nu}$.

 Let us now discuss vectorial transformations. In any dimension, there exist three independent such transformations\footnote{Here $\xi_{\mu}$, $\zeta_{\mu}$, and $\chi_{\mu}$ are arbitrary one forms.} 
\beq
\Gamma^{\lambda}_{\;\;\mu\nu}\longrightarrow \Gamma^{\lambda}_{\;\;\mu\nu}+\delta^{\lambda}_{\mu}\xi_{\nu}
\eeq
\beq
\Gamma^{\lambda}_{\;\;\mu\nu}\longrightarrow \Gamma^{\lambda}_{\;\;\mu\nu}+\delta^{\lambda}_{\nu}\zeta_{\mu}
\eeq
\beq
\Gamma^{\lambda}_{\;\;\mu\nu}\longrightarrow \Gamma^{\lambda}_{\;\;\mu\nu}+\chi^{\lambda}g_{\mu\nu}
\eeq
which we shall call transformation of the \textit{$1^{st}$, $2^{nd}$ and $3^{rd}$ $kind$} respectively. This nomenclature will make sense in the proceeding sections. Transformations of the \textbf{$1^{st}$ kind} are also known as \textbf{projective transformations}\cite{schouten2013ricci,eisenhart2012non}. Note that in $4$-dimensions there is also another possibility, namely
\beq
\Gamma^{\lambda}_{\;\;\mu\nu}\longrightarrow \Gamma^{\lambda}_{\;\;\mu\nu}+\epsilon^{\lambda}_{\;\;\mu\nu\alpha}K^{\alpha}
\eeq
where $\epsilon_{\mu\nu\rho\sigma}$ is the Levi-Civita tensor and $K_{\mu}$ an arbitrary pseudo-vector. However, since we want to keep our discussion as general as possible (for any dimension) we will mainly consider transformations of the above three kinds. A \textbf{general} \textbf{vectorial} \textbf{transformation} of the affine connection will therefore read\footnote{It would be interesting to find the equivalent class of paths whose connections are related by this general vectorial transformation just like the equivalent class of projective connections \cite{eisenhart2012non,schouten2013ricci}(see also \cite{iosifidis2019torsion,bejarano2019geometric} ).}
\beq
\Gamma^{\lambda}_{\;\;\mu\nu}\longrightarrow \Gamma^{\lambda}_{\;\;\mu\nu}+\delta^{\lambda}_{\mu}\xi_{\nu}+\delta^{\lambda}_{\nu}\zeta_{\mu}+\chi^{\lambda}g_{\mu\nu} \label{vecgen}
\eeq
where $\xi_{\mu}$, $\zeta_{\mu}$, and $\chi_{\mu}$ are in general independent one forms. If these fields are parallel (i.e there exist $\lambda_{1}$,$\lambda_{2}$ and $\lambda_{3}$ $\in \mathbb{R}$ such that $\xi_{\mu}=\lambda_{1}v_{\mu},\;\zeta_{\mu}=\lambda_{2}v_{\mu}$, $\chi_{\mu}=\lambda_{3}v_{\mu}$ ) we have a \textbf{constrained vectorial transformation}
\beq
\Gamma^{\lambda}_{\;\;\mu\nu}\longrightarrow \Gamma^{\lambda}_{\;\;\mu\nu}+\lambda_{1}\delta^{\lambda}_{\mu}v_{\nu}+\lambda_{2}\delta^{\lambda}_{\nu}v_{\mu}+\lambda_{3}v^{\lambda}g_{\mu\nu} \label{veccons}
\eeq
Note that for specific values of the $\lambda_{i}'s$ one obtains the Weyl geometry and its torsionful extensions considered in \cite{jimenez2016spacetimes,beltran2017modified}.
We will come back to all of the above transformations and also discuss functional invariance with respect to those, after we present our Theorem. We shall now continue with some definitions.

\section{Definitions used for the Theorem}

Before presenting our Theorem, it would be first useful to introduce the two definitions below.
\begin{mydef}
	Let $N_{\alpha\mu\nu}$ be a type-($0,3$) tensor of a given symmetry $S_{N}$ in its indices. We shall write
	\beq
	N_{\alpha\mu\nu}=\hat{S}_{N}N_{\alpha\mu\nu}
	\eeq
	where $ \hat{S}_{N}$ is the symmetry operation applied on $N_{\alpha\mu\nu}$. This symmetry operation can also be applied to another tensor $\Psi_{\alpha\mu\nu}$ of the same rank and we write
	\beq
	\hat{S}_{N}\Psi_{\alpha\mu\nu}
	\eeq
	which says that the index symmetry of $N_{\alpha\mu\nu}$ is passed over to $\Psi_{\alpha\mu\nu}$.
\end{mydef}
The above definition shall become clear with the following example.
\newline
\textbf{Example.}
Let $N_{\alpha\mu\nu}$ be a tensor that is symmetric in its first two indices, we then have
\beq
N_{\alpha\mu\nu}=\hat{S}_{N}N_{\alpha\mu\nu}=N_{(\alpha\mu)\nu}
\eeq
where  in this instance $\hat{S}_{N}$ means  \textit{'Symmetrize in the first two indices'}. We also have
\beq
\hat{S}_{N}\Psi_{\alpha\mu\nu}=\Psi_{(\alpha\mu)\nu}
\eeq
as well as
\beq
N_{\alpha\mu\nu}\Psi^{\alpha\mu\nu}=(\hat{S}_{N}N_{\alpha\mu\nu})\Psi^{\alpha\mu\nu}=N_{(\alpha\mu)\nu}\Psi^{(\alpha\mu)\nu}
\eeq

\begin{mydef}
	Let $\Psi_{\lambda}^{\;\;\mu\nu}$ be a type $(2,1)$ tensor (or tensor density). Define the three independent contractions
	\beq
	\Psi_{(1)}^{\nu}:=\Psi_{\lambda}^{\;\;\lambda\nu}\;, \; \Psi_{(2)}^{\mu}:=\Psi_{\lambda}^{\;\;\mu\lambda}\;, \; \Psi_{(3)\lambda}:=\Psi_{\lambda}^{\;\;\mu\nu}g_{\mu\nu}
	\eeq
\end{mydef}
We shall call them $1^{st}$, $2^{nd}$ and $3^{rd}$ contraction of $\Psi_{\lambda}^{\;\;\mu\nu}$. That is the $1^{st}$ contraction involves contraction between first and second indices, the second between second and third etc. With the help of the above definitions we are now in a position to present our Theorem.

\section{The Theorem}

\begin{theorem}
	Let $\Psi(g,\Gamma)$ be a function of the metric $g_{\mu\nu}$ and the affine connection $\Gamma^{\lambda}_{\;\;\;\mu\nu}$ and let\footnote{In this context $\psi$ appears to be a scalar while $\Psi=\sqrt{|g|}\psi$ is a scalar density. For convenience we will work with $\Psi$ but of course it is obvious that identical results hold true for $\psi$ as well.}
	\beq
	J[g,\Gamma]=\int d^{n}x \sqrt{|g|}\psi(g,\Gamma) =\int d^{n}x \Psi(g,\Gamma) \label{theo}
	\eeq
	be a functional of these variables. Define the variational derivative
	\beq
	\Psi_{\lambda}^{\;\;\mu\nu}:=\frac{\delta \Psi}{\delta \Gamma^{\lambda}_{\;\;\;\mu\nu}}
	\eeq 
	If $\Psi$ (and subsequently $J$) is invariant under the transformation
	\beq
	\Gamma^{\lambda}_{\;\;\;\mu\nu}\longrightarrow \Gamma^{\lambda}_{\;\;\;\mu\nu}+N^{\lambda}_{\;\;\;\mu\nu}\label{gencon}
	\eeq
	where $N_{\alpha\mu\nu}=\hat{S}_{N}N_{\alpha\mu\nu}$ is a tensor of a given symmetry $\hat{S}_{N}$, then
	\beq
	\hat{S}_{N}\Psi^{\alpha\mu\nu}=0 \label{psiS}
	\eeq
	Conversely, if ($\ref{psiS}$) holds true the initial scalar $\Psi$ (and subsequently the functional $J$) is invariant under the connection transformation $(\ref{gencon})$.
\end{theorem}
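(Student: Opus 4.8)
The plan is to reduce the finite connection shift to its first-order variation and then exploit the fact that the symmetry operator $\hat{S}_N$ is an idempotent, self-adjoint projection. First I would write the variation of $\Psi$ induced by $(\ref{gencon})$, i.e.\ with $\delta\Gamma^{\lambda}_{\;\;\;\mu\nu}=N^{\lambda}_{\;\;\;\mu\nu}$, as
\beq
\delta \Psi = \frac{\delta \Psi}{\delta \Gamma^{\lambda}_{\;\;\;\mu\nu}} N^{\lambda}_{\;\;\;\mu\nu} = \Psi_{\lambda}^{\;\;\mu\nu} N^{\lambda}_{\;\;\;\mu\nu} = \Psi^{\alpha\mu\nu} N_{\alpha\mu\nu}
\eeq
where indices are raised and lowered with the metric, which is inert under $(\ref{gencon})$. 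The crucial observation is the self-adjointness of the symmetry operation illustrated in the Example: since $N_{\alpha\mu\nu}=\hat{S}_N N_{\alpha\mu\nu}$, one may transfer the projection onto the other factor,
\beq
\Psi^{\alpha\mu\nu} N_{\alpha\mu\nu} = \Psi^{\alpha\mu\nu}(\hat{S}_N N_{\alpha\mu\nu}) = (\hat{S}_N \Psi^{\alpha\mu\nu}) N_{\alpha\mu\nu}
\eeq
so that the variation sees $\Psi^{\alpha\mu\nu}$ only through its projection $\hat{S}_N \Psi^{\alpha\mu\nu}$.

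For the forward implication I would use that invariance forces $\delta\Psi=0$ for every admissible $N$, hence $(\hat{S}_N\Psi^{\alpha\mu\nu})N_{\alpha\mu\nu}=0$ for all tensors $N_{\alpha\mu\nu}$ carrying the symmetry $\hat{S}_N$. Because $\hat{S}_N$ is idempotent, $\hat{S}_N\Psi^{\alpha\mu\nu}$ itself already possesses the symmetry $\hat{S}_N$, so any part of an arbitrary test tensor lying outside that symmetry class is annihilated against it; the displayed contraction therefore vanishes against an arbitrary type-$(0,3)$ tensor. Non-degeneracy of the full contraction pairing then yields $(\ref{psiS})$.

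For the converse I would run the same chain in reverse: assuming $(\ref{psiS})$, the second displayed equation gives $\delta\Psi = (\hat{S}_N\Psi^{\alpha\mu\nu})N_{\alpha\mu\nu}=0$ for every $N$ of symmetry $\hat{S}_N$. Since the metric is untouched by $(\ref{gencon})$ and $\hat{S}_N\Psi^{\alpha\mu\nu}=0$ is meant to hold identically in $\Gamma$, the first-order variation vanishes at every point of the one-parameter family $\Gamma^{\lambda}_{\;\;\;\mu\nu}(t)=\Gamma^{\lambda}_{\;\;\;\mu\nu}+tN^{\lambda}_{\;\;\;\mu\nu}$, and integrating in $t$ promotes infinitesimal invariance to invariance under the full finite shift. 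I expect the main obstacle to lie in the forward step, namely the careful justification that $(\hat{S}_N\Psi^{\alpha\mu\nu})N_{\alpha\mu\nu}=0$ for all admissible $N$ forces $\hat{S}_N\Psi^{\alpha\mu\nu}=0$; this rests entirely on $\hat{S}_N$ being a self-adjoint idempotent, so that the subspace one tests against coincides with the image in which $\hat{S}_N\Psi^{\alpha\mu\nu}$ resides.
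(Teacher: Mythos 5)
Your proposal is correct and follows essentially the same route as the paper's own proof: both reduce the connection shift to the first-order $\Gamma$-variation, use the self-adjoint, idempotent character of $\hat{S}_{N}$ to transfer the projection from $N_{\alpha\mu\nu}$ onto $\Psi^{\alpha\mu\nu}$, then invoke arbitrariness of $N$ for the forward direction and run the chain backwards for the converse. Your one refinement---integrating the vanishing first variation along the family $\Gamma^{\lambda}_{\;\;\;\mu\nu}+tN^{\lambda}_{\;\;\;\mu\nu}$ to obtain invariance under the finite shift, rather than identifying $J[g,\Gamma+N]-J[g,\Gamma]$ directly with the first variation as the paper does---is a tightening of the same argument, not a different approach.
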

\begin{proof}
	We will start by first proving the $''\Rightarrow''$ part. Since by hypothesis $\Psi$ (and also $J$) is invariant under the transformation $(\ref{gencon})$ provided that  $N_{\alpha\mu\nu}=\hat{S}_{N}N_{\alpha\mu\nu}$, we have 
	\beq
\delta_{\Gamma}J=0	
	\eeq
and by expanding we get
\begin{gather}
0=\int d^{n}x \frac{\delta \Psi}{\delta \Gamma^{\lambda}_{\;\;\;\mu\nu}}\delta \Gamma^{\lambda}_{\;\;\;\mu\nu}= \nonumber \\
=\int d^{n}x \Psi_{\lambda}^{\;\;\mu\nu}N^{\lambda}_{\;\;\mu\nu}=\int d^{n}x \Psi^{\alpha\mu\nu}(\hat{S}_{N}N_{\alpha\mu\nu}) = \nonumber \\
=\int d^{n}x (\hat{S}_{N}\Psi^{\alpha\mu\nu})(\hat{S}_{N}N_{\alpha\mu\nu})
\end{gather}
but since $S_{N}N_{\alpha\mu\nu}$ is arbitrary we conclude that
\beq
\hat{S}_{N}\Psi^{\alpha\mu\nu} =0 \label{psieq}
\eeq
Conversely, given that $(\ref{psieq})$ holds true we compute
\begin{gather}
J[g,\Gamma+N]-J[g,\Gamma]=\int d^{n}x \frac{\delta \Psi}{\delta \Gamma^{\lambda}_{\;\;\;\mu\nu}}\delta \Gamma^{\lambda}_{\;\;\;\mu\nu}= \nonumber \\
=\int d^{n}x \Psi_{\lambda}^{\;\;\mu\nu}N^{\lambda}_{\;\;\mu\nu}=\int d^{n}x \Psi^{\alpha\mu\nu}(\hat{S}_{N}N_{\alpha\mu\nu}) = \nonumber \\
=\int d^{n}x (\hat{S}_{N}\Psi^{\alpha\mu\nu})(\hat{S}_{N}N_{\alpha\mu\nu})=0
\end{gather}
That is
\beq
J[g,\Gamma+N]=J[g,\Gamma]
\eeq
and therefore the functional ($\ref{theo}$) is invariant under transformations of the form $(\ref{gencon})$ provided that $N_{\alpha\mu\nu}=\hat{S}_{N}N_{\alpha\mu\nu}$.

\end{proof}

As an immediate application of the above Theorem we claim the following corollaries.

\begin{cor}
	If the functional ($\ref{theo}$) is invariant under connection transformations of the $i-th$ $kind$ ($i=1,2,3$) then its  $\Gamma$-variation produces a tensor that has vanishing $i-th$ contraction
	\beq
	\Psi_{(i)}^{\mu}=0
	\eeq
	\end{cor}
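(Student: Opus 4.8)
The plan is to specialize the first-order variation obtained in the proof of the Theorem to each of the three vectorial transformations introduced earlier. For $i=1,2,3$, the transformation of the $i$-th kind is exactly of the form $(\ref{gencon})$ with
\[
N^{\lambda}_{\;\;\mu\nu}=\delta^{\lambda}_{\mu}\xi_{\nu},\qquad N^{\lambda}_{\;\;\mu\nu}=\delta^{\lambda}_{\nu}\zeta_{\mu},\qquad N^{\lambda}_{\;\;\mu\nu}=\chi^{\lambda}g_{\mu\nu},
\]
respectively, where $\xi$, $\zeta$, $\chi$ are the arbitrary one-forms generating the transformation. First I would insert each of these choices into the variational identity $\delta_{\Gamma}J=\int d^{n}x\,\Psi_{\lambda}^{\;\;\mu\nu}N^{\lambda}_{\;\;\mu\nu}$ already established in the proof above.

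The key step is that the Kronecker deltas and the metric appearing in $N^{\lambda}_{\;\;\mu\nu}$ act as contraction operators on $\Psi_{\lambda}^{\;\;\mu\nu}$ and reproduce precisely the three contractions of Definition~2. Carrying out the index contractions yields $\delta_{\Gamma}J=\int d^{n}x\,\Psi_{(1)}^{\nu}\xi_{\nu}$ for the $1^{st}$ kind, $\delta_{\Gamma}J=\int d^{n}x\,\Psi_{(2)}^{\mu}\zeta_{\mu}$ for the $2^{nd}$ kind, and $\delta_{\Gamma}J=\int d^{n}x\,\Psi_{(3)\lambda}\chi^{\lambda}$ for the $3^{rd}$ kind. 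This is also what justifies, a posteriori, the matching of the names of the three kinds of transformation with the three contractions.

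Finally I would impose the invariance hypothesis $\delta_{\Gamma}J=0$ and use that $\xi$, $\zeta$ and $\chi$ are arbitrary. By the fundamental lemma of the calculus of variations the respective integrand must vanish identically, giving $\Psi_{(i)}^{\mu}=0$ in each of the three cases. I do not anticipate a genuine obstacle: this is an immediate corollary of the Theorem, and the only point demanding care is the bookkeeping of index positions, so that each transformation is paired with its correct contraction. Alternatively the same conclusion could be reached by casting each transformation in the symmetry-operator language $N_{\alpha\mu\nu}=\hat{S}_{N}N_{\alpha\mu\nu}$ of the Theorem and reading off $\hat{S}_{N}\Psi^{\alpha\mu\nu}=0$, but the direct contraction route is more transparent.
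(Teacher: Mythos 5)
Your proposal is correct and follows essentially the same route as the paper: the paper offers no separate proof, presenting the corollary as an immediate application of the Theorem, and your argument is exactly that specialization, inserting $N^{\lambda}_{\;\;\mu\nu}=\delta^{\lambda}_{\mu}\xi_{\nu}$, $\delta^{\lambda}_{\nu}\zeta_{\mu}$, $\chi^{\lambda}g_{\mu\nu}$ into the variational identity $\delta_{\Gamma}J=\int d^{n}x\,\Psi_{\lambda}^{\;\;\mu\nu}N^{\lambda}_{\;\;\mu\nu}$ and invoking arbitrariness of the one-forms. If anything, your direct-contraction bookkeeping is slightly more explicit than the paper's symmetry-operator phrasing, which is somewhat loose for these trace-type (rather than index-permutation) transformations, but the underlying mechanism is identical.
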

and conversely, if the $i-th$ contraction of $\Psi_{\lambda}^{\;\;\mu\nu}$ vanishes identically, then the functional $(\ref{theo})$ is invariant under connection transformations of the $i-th$ $kind$.

\begin{cor}
If the functional ($\ref{theo}$) is invariant under general vectorial transformations of the connection of the form ($\ref{vecgen}$), then its $\Gamma$-variation produces a tensor that has all of its contractions vanishing, namely the tensor $\Psi_{\lambda}^{\;\;\mu\nu}$ is traceless in all of its contractions
	\beq
\Psi_{(i)}^{\mu}=0\;, \forall i=1,2,3
\eeq
Conversely, if the tensor obtained by the $\Gamma$-variation of ($\ref{theo}$) is traceless in all of its contractions then the action is invariant under general vectorial transformations of the connection of the type($\ref{vecgen}$).
\end{cor}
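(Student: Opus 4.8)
The plan is to exploit the fact that the general vectorial transformation (\ref{vecgen}) is simply the superposition of the three transformations of the $i$-th kind, combined with the linearity of the $\Gamma$-variation. In the notation of the Theorem the relevant shift tensor is
\beq
N^{\lambda}_{\;\;\mu\nu}=\delta^{\lambda}_{\mu}\xi_{\nu}+\delta^{\lambda}_{\nu}\zeta_{\mu}+\chi^{\lambda}g_{\mu\nu}
\eeq
with $\xi_{\mu}$, $\zeta_{\mu}$ and $\chi_{\mu}$ independent one-forms. Since this $N$ does not carry a single index symmetry of the type $\hat{S}_{N}$, the main Theorem does not apply to it directly; instead I would route the argument through Corollary 1, which already resolves each kind separately, and then assemble the three pieces.

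For the ``$\Rightarrow$'' direction I would specialize the hypothesis. Assuming $J$ is invariant for arbitrary independent $\xi,\zeta,\chi$, I set any two of the three one-forms to zero, leaving a pure transformation of the $i$-th kind under which $J$ remains invariant. Corollary 1 then forces $\Psi_{(i)}^{\mu}=0$, and repeating this specialization for $i=1,2,3$ yields the vanishing of all three contractions.

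For the converse I would compute the first-order change of the functional directly. Inserting the shift tensor into $\delta_{\Gamma}J=\int d^{n}x\,\Psi_{\lambda}^{\;\;\mu\nu}N^{\lambda}_{\;\;\mu\nu}$ and contracting the two Kronecker deltas and the metric against $\Psi_{\lambda}^{\;\;\mu\nu}$ gives, using Definition 2 to identify the contractions,
\beq
\delta_{\Gamma}J=\int d^{n}x\left(\Psi_{(1)}^{\nu}\xi_{\nu}+\Psi_{(2)}^{\mu}\zeta_{\mu}+\Psi_{(3)\lambda}\chi^{\lambda}\right)
\eeq
If all three contractions vanish the integrand is identically zero for every choice of $\xi,\zeta,\chi$, which establishes invariance under (\ref{vecgen}).

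The argument is essentially bookkeeping of linearity, so I expect no substantive obstacle; the only point requiring care is the independence of the three one-forms. In the forward direction this independence is exactly what lets the three contractions be isolated one at a time, rather than yielding merely a single linear combination of them; in the converse it guarantees that the vanishing of each contraction is genuinely necessary. Beyond keeping track of which index pair each contraction involves, there is nothing delicate here.
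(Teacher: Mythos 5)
Your proposal is correct and follows essentially the route the paper intends: the paper states this corollary as an immediate application of the Theorem (via Corollary 1), and your decomposition of the general vectorial shift into the three kinds, together with the linearity computation $\delta_{\Gamma}J=\int d^{n}x\,(\Psi_{(1)}^{\nu}\xi_{\nu}+\Psi_{(2)}^{\mu}\zeta_{\mu}+\Psi_{(3)\lambda}\chi^{\lambda})$, is exactly that bookkeeping made explicit. Your remark that the combined shift tensor carries no single index symmetry $\hat{S}_{N}$, so that one should argue through the independence of the three one-forms rather than through the Theorem verbatim, is a fair and careful observation about the paper's framework.
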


\begin{cor}
If the functional ($\ref{theo}$) is invariant under constrained vectorial transformations of the form ($\ref{veccons}$), then the traces of its $\Gamma$-variation $\Psi_{\lambda}^{\;\;\mu\nu}$ , satisfy
\beq
\sum_{i=1}^{3}\lambda_{i}\Psi_{(i)}^{\mu}=0
\eeq
That is, the three traces are  linearly dependent.
\end{cor}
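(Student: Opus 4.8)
The plan is to apply the same first-order variational argument used in the Theorem, but now to the particular shift tensor generated by (\ref{veccons}), exploiting the crucial feature that a \emph{single} one-form $v_{\mu}$ multiplies all three structures. First I would read off from (\ref{veccons}) the shift tensor
\beq
N^{\lambda}_{\;\;\mu\nu}=\lambda_{1}\delta^{\lambda}_{\mu}v_{\nu}+\lambda_{2}\delta^{\lambda}_{\nu}v_{\mu}+\lambda_{3}v^{\lambda}g_{\mu\nu},
\eeq
and substitute it into the first-order variation $\delta_{\Gamma}J=\int d^{n}x\,\Psi_{\lambda}^{\;\;\mu\nu}N^{\lambda}_{\;\;\mu\nu}$, which vanishes by the assumed invariance.

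The central step is to contract $\Psi_{\lambda}^{\;\;\mu\nu}$ against each of the three terms of $N^{\lambda}_{\;\;\mu\nu}$ and identify the outcome with one of the three contractions $\Psi_{(i)}$. The $\lambda_{1}$ piece forces the lower index onto the first upper index, producing $\Psi_{\lambda}^{\;\;\lambda\nu}v_{\nu}=\Psi_{(1)}^{\nu}v_{\nu}$; the $\lambda_{2}$ piece forces it onto the second upper index, producing $\Psi_{\lambda}^{\;\;\mu\lambda}v_{\mu}=\Psi_{(2)}^{\mu}v_{\mu}$; and the $\lambda_{3}$ piece contracts the two upper indices with the metric, producing $\Psi_{\lambda}^{\;\;\mu\nu}g_{\mu\nu}v^{\lambda}=\Psi_{(3)\lambda}v^{\lambda}$. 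Renaming the dummy index to $\mu$ throughout and using $v^{\lambda}=g^{\lambda\mu}v_{\mu}$ to raise the free index of the third contraction, all three terms acquire the common factor $v_{\mu}$, so that
\beq
\delta_{\Gamma}J=\int d^{n}x\,\Big(\lambda_{1}\Psi_{(1)}^{\mu}+\lambda_{2}\Psi_{(2)}^{\mu}+\lambda_{3}\Psi_{(3)}^{\mu}\Big)v_{\mu}=0.
\eeq

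The concluding step is to invoke the arbitrariness of $v_{\mu}$, which forces the bracketed integrand to vanish pointwise and yields $\sum_{i=1}^{3}\lambda_{i}\Psi_{(i)}^{\mu}=0$. This final step is also the conceptual crux of the statement: because one and the same $v_{\mu}$ weights all three structures---unlike the general vectorial transformation (\ref{vecgen}), whose three independent one-forms $\xi_{\mu}$, $\zeta_{\mu}$, $\chi_{\mu}$ each enforce a separate trace to vanish---the variation constrains only the $\lambda_{i}$-weighted combination of the traces. One therefore obtains a single relation, establishing the linear dependence of $\Psi_{(1)}$, $\Psi_{(2)}$ and $\Psi_{(3)}$ rather than three independent vanishing conditions. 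No analytic difficulty arises; the only care needed is in the index bookkeeping of the three contractions and in the metric raising of the third term.
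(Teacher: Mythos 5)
Your proof is correct and is essentially the paper's own argument: the paper derives this corollary as an immediate application of its Theorem, i.e., by inserting the specific shift tensor $N^{\lambda}_{\;\;\mu\nu}=\lambda_{1}\delta^{\lambda}_{\mu}v_{\nu}+\lambda_{2}\delta^{\lambda}_{\nu}v_{\mu}+\lambda_{3}v^{\lambda}g_{\mu\nu}$ into the first-order variation $\delta_{\Gamma}J=\int d^{n}x\,\Psi_{\lambda}^{\;\;\mu\nu}N^{\lambda}_{\;\;\mu\nu}$ and invoking the arbitrariness of the single one-form $v_{\mu}$. Your index bookkeeping identifying the three terms with $\Psi_{(1)}$, $\Psi_{(2)}$, $\Psi_{(3)}$, and your observation that a single $v_{\mu}$ yields only one linear relation rather than three separate vanishing traces, match the paper's intent exactly.
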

Note that special cases of the above give the results of $Corollary$ $1$. In addition for the parameter choice $\lambda_{1}=\lambda_{2}=-\lambda_{3}$ one obtains invariance under a  Weyl transformation of the connection. To be more precise, if it holds that
\beq
\Psi_{(1)}^{\mu}+\Psi_{(2)}^{\mu}-\Psi_{(3)}^{\mu}=0
\eeq
then the functional ($\ref{theo}$) is invariant under connection transformations of the form
\beq
\Gamma^{\lambda}_{\;\;\mu\nu}\longrightarrow \Gamma^{\lambda}_{\;\;\mu\nu}+\lambda_{1}\Big( \delta^{\lambda}_{\mu}v_{\nu}+\delta^{\lambda}_{\nu}v_{\mu}-v^{\lambda}g_{\mu\nu} \Big)
\eeq
Let us now discuss some physical applications of the above Theorem.

\section{Applications to Gravity}
Our Theorem and its subsequent corollaries find its nature application on the scheme of Metric-Affine Gravity. For instance it can be applied for obtaining Theories that are invariant under projective transformations of the connection, or even more general ones. The powerful use of our Theorem is that it can be applied conversely, as we have shown.
 That is, we can obtain actions invariant under a specific group of connection transformations by simply looking at the symmetries and constraints of its $\Gamma$-variation.  The advantage is that even though it is a difficult task to find invariant actions, on the other hand, it is straightforward to take contractions and symmetrizations of a given tensor. Our result shows that one implies the other and vice versa and we therefore have a powerful tool for obtaining invariant actions just by looking at the symmetries of the corresponding $\Gamma$-variation of the action. The idea will become clear with the following examples.
 
\subsection{Invariant Theories}
\textbf{Example-Application.} Let us consider the quadratic action of \cite{iosifidis2019scale}
\begin{gather}
S
=\frac{1}{2 \kappa}\int d^{n}x \sqrt{-g} \Big[   
b_{1}S_{\alpha\mu\nu}S^{\alpha\mu\nu} +
b_{2}S_{\alpha\mu\nu}S^{\mu\nu\alpha} +
b_{3}S_{\mu}S^{\mu} \nonumber \\
a_{1}Q_{\alpha\mu\nu}Q^{\alpha\mu\nu} +
a_{2}Q_{\alpha\mu\nu}Q^{\mu\nu\alpha} +
a_{3}Q_{\mu}Q^{\mu}+
a_{4}q_{\mu}q^{\mu}+
a_{5}Q_{\mu}q^{\mu} \nonumber \\
+c_{1}Q_{\alpha\mu\nu}S^{\alpha\mu\nu}+
c_{2}Q_{\mu}S^{\mu} +
c_{3}q_{\mu}S^{\mu} \Big]  \label{genact}
\end{gather}
This is the most general parity even action that is quadratic in torsion and non-metricity \cite{iosifidis2019scale,jimenez2019general}. We will now show how one can restrict the parameter space in order to obtain a projective invariant Theory without computing the change in $S$ directly but by simply looking at its $\Gamma$-variation. Indeed, varying the above with respect to the connection, we have \cite{iosifidis2019scale}
\begin{gather}
\Psi_{\lambda}^{\;\;\mu\nu}= H^{\mu\nu}_{\;\;\;\;\lambda}+\delta^{\mu}_{\lambda}k^{\nu}+\delta^{\nu}_{\lambda}h^{\mu}+g^{\mu\nu}h_{\lambda}+f^{[\mu}\delta^{\nu ]}_{\lambda} 
\end{gather}
where
\begin{gather}
H^{\mu\nu}_{\;\;\;\;\lambda} = 4 a_{1}Q^{\nu\mu}_{\;\;\;\;\lambda}+2 a_{2}(Q^{\mu\nu}_{\;\;\;\;\lambda}+Q_{\lambda}^{\;\;\;\mu\nu})+2 b_{1}S^{\mu\nu}_{\;\;\;\;\lambda} \nonumber \\
+2 b_{2}S_{\lambda}^{\;\;\;[\mu\nu]}+c_{1}( S^{\nu\mu}_{\;\;\;\;\lambda}-S_{\lambda}^{\;\;\;\nu\mu}+Q^{[\mu\nu]}_{\;\;\;\;\;\lambda})
\end{gather}
\beq
k_{\mu} = 4 a_{3}Q_{\mu}+2 a_{5}q_{\mu}+2 c_{2}S_{\mu}
\eeq
\beq
h_{\mu} =a_{5} Q_{\mu}+2 a_{4}q_{\mu}+c_{3}S_{\mu}
\eeq
\beq
f_{\mu} = c_{2} Q_{\mu}+ c_{3}q_{\mu}+2 b_{3}S_{\mu}
\eeq
Taking the first contraction of $\Psi_{\lambda}^{\;\;\mu\nu}$ it follows that\footnote{Recall that the first contraction is defined by $\Psi_{(1)}^{\nu}:=\Psi_{\mu}^{\;\;\mu\nu}$}
\begin{gather}
\Psi_{(1)}^{\nu}=\Big[ 4a_{1}-\frac{c_{1}}{2}+4 n a_{3}+2 a_{5}+\frac{(1-n)}{2}c_{2} \Big] Q^{\nu} \nonumber \\
+\Big[ 4 a_{2}+\frac{c_{1}}{2}+2 n a_{5}+4 a_{4}+\frac{(1-n)}{2}c_{3} \Big] \tilde{Q}^{\nu} \nonumber \\
+\Big[ -2b_{1}+b_{2}+2 c_{1}+2 n c_{2}+2 c_{3}+(1-n)b_{3} \Big] S^{\nu}
\end{gather}
Now demanding a vanishing first contraction we get the constraints
\beq
4 (2 a_{1}+2 n a_{3}+a_{5})-c_{1}+(1-n)c_{2}=0 \label{a1}
\eeq
\beq
4 (2 a_{2}+2  a_{4}+n a_{5})+c_{1}+(1-n)c_{3}=0
\eeq
\beq
-2 b_{1}+b_{2}+(1-n) b_{3}+2(c_{1} +n c_{2}+c_{3})=0 \label{a3}
\eeq
which in turn, by virtue of our Theorem, imply that our starting action\footnote{The same result was also used earlier in \cite{jimenez2018teleparallel} in order to obtain a projective invariant torsion scalar.} is projective invariant for this parameter choice. Note that the above three conditions on the parameters are in perfect agreement with the constraints for a projective invariant Theory obtained in \cite{iosifidis2019scale} and here we obtained them almost without effort. Note also that the fourth constraint found there is redundant and can be shown to be a linear combination of $(\ref{a1})-(\ref{a3})$. Going one step further we can immediately obtain the parameter choice for which the above action is invariant under connection transformations of the second kind. By simply demanding a vanishing $2^{nd}$ contraction ($\Psi_{(2)}^{\mu}=0$) we get 
\beq
2 a_{2}+\frac{c_{1}}{2}+4 a_{3}+(n+1)a_{5}+\frac{(n-1)}{2}c_{2}=0 \label{a4}
\eeq  
\beq
4 a_{1}+2 a_{2}-\frac{c_{1}}{2}+2 ( a_{4}+a_{5})+\frac{(n-1)}{2}c_{3}=0
\eeq
\beq
2 b_{1}-b_{2}-c_{1}+2 c_{2}+(n+1)c_{3}+(n-1)b_{3}=0 \label{a6}
\eeq
Similarly, by demanding a vanishing $3^{rd}$ contraction ($\Psi_{(3)}^{\lambda}=0$) we  obtain the parameter space for the Theory that is invariant under transformations of the $3^{rd}$ $kind$ , which reads
\beq
2 a_{2}+4 a_{3}+(n+1)a_{5}=0 \label{a7}
\eeq
\beq
2 a_{1}+a_{2}+a_{5}+(n+1) a_{4}=0
\eeq
\beq
-c_{1}+2 c_{2}+(n+1)c_{3}=0 \label{a9}
\eeq
Notice that there is no restriction on the  $b_{i}'s$. This is of course to be expected since the $b_{i}'s$ multiply the purely torsional terms and these, and torsion being antisymmetric in its first two indices is insensitive to transformations of the $3^{rd}$ $kind$. From the above exposure, we come to the following conclusions:
\begin{enumerate}
	\item If the parameters of the action ($\ref{genact}$) satisfy the set of equations $(\ref{a1})$-$(\ref{a3})$ then the action is invariant under projective transformations (or else transformations of the $1^{st}$ kind) of the affine connection and the $1^{st}$ contraction of its $\Gamma$-variation vanishes identically.
	\item If the parameters of the action ($\ref{genact}$) satisfy the set of equations $(\ref{a4})$-$(\ref{a6})$ then the action is invariant under  transformations of the $2^{nd}$ kind of the affine connection and the $2^{nd}$ contraction of its $\Gamma$-variation vanishes identically.
	\item If the parameters of the action ($\ref{genact}$) satisfy the set of equations $(\ref{a7})$-$(\ref{a9})$ then the action is invariant under  transformations of the $3^{nd}$ kind of the affine connection and the $3^{nd}$ contraction of its $\Gamma$-variation vanishes identically.
\end{enumerate}
In addition if we combine all the above restrictions on the parameters we are lead to the following conclusion:
 
\begin{itemize}
\item The general quadratic action $(\ref{genact})$  is invariant under general vectorial transformations of the connection of the form ($\ref{vecgen}$) iff the parameters satisfy the set of equations  $(\ref{a1})$-$(\ref{a3})$, $(\ref{a4})$-$(\ref{a6})$ and $(\ref{a7})$-$(\ref{a9})$.
\end{itemize}

\subsection{Constraints on The Matter Sector}
The above transformations of the connection and the subsequent invariances of the Gravity sectors are translated to certain constraints on the hypermomentum tensor, once applied to the matter part of the Theory. Indeed, let us consider a general MAG Lagrangian\footnote{Of course, the Gravitational part will have an explicit dependence on the curvature, torsion and non-metricity tensors. All of the aforementioned tensors, however, are derivable from the metric, the connection and their derivatives. So in the end the basic variables are the metric, the connection and their derivatives.}
\beq
S[g,\Gamma,\phi]=\frac{1}{2 \kappa}S_{G}[g,\Gamma]+S_{M}[g,\Gamma,\phi] \label{SG}
\eeq
where $S_{G}$ denotes the Gravitational sector (Geometry) and $S_{M}$ represents the matter fields. The equations of motion then follow by varying independently the total action with respect to the metric and the connection, and one finds
\beq
Z_{\mu\nu}=\kappa T_{\mu\nu}
\eeq
\beq
P_{\lambda}^{\;\;\mu\nu}=\kappa \Delta_{\lambda}^{\;\;\;\mu\nu}
\eeq
where
\beq
Z_{\mu\nu}:=\frac{1}{\sqrt{-g}}\frac{\delta S_{G}}{\delta g^{\mu\nu}} \;\;,\;\;\; P_{\lambda}^{\;\;\mu\nu}:=\frac{1}{\sqrt{-g}}\frac{\delta S_{G}}{\delta \Gamma^{\lambda}_{\;\;\;\mu\nu}} \nonumber
\eeq
for the metric and connection variations respectively. In the above $T_{\mu\nu}$ is the usual metrical energy momentum tensor
\beq
T_{\mu\nu}:=-\frac{2}{\sqrt{-g}}\frac{\delta S_{M}}{\delta g^{\mu\nu}}
\eeq
and $\Delta_{\lambda}^{\;\;\;\mu\nu}$ is the so-called hypermomentum tensor\cite{hehl1976hypermomentum}
\beq
\Delta_{\lambda}^{\;\;\;\mu\nu}:=-\frac{2}{\sqrt{-g}}\frac{\delta S_{M}}{\delta \Gamma^{\lambda}_{\;\;\;\mu\nu}}
\eeq
which  encompasses the microscopic characteristics of matter such as spin, shear and dilation\cite{hehl1995metric}. The couplet ($T_{\mu\nu}$ , $\Delta_{\lambda}^{\;\;\mu\nu}$) represent the  sources for MAG. From the above it is clear that if the Gravity sector is invariant under certain transformations of the affine connection (of the form $(\ref{affcongen})$) then this invariance puts constraints on the form of the hypermomentum. In other words, it constrains the types of matter fields that can enter the spacetime. Perhaps the most common example of this, is the projective invariance of the Einstein-Hilbert action\footnote{That is $R$ is invariant under projective transformations of the connection.} which imposes a vanishing dilation current
\beq
\Delta_{\lambda}^{\;\; \lambda \mu}=0 \label{constD}
\eeq
These constraints may be unphysical in a general setting, however as noted in \cite{jimenez2018teleparallel} both fermionic and bosonic matter is projective invariant. Therefore, whether  constraints of the form $(\ref{constD})$ are unphysical or not is not such an easy question to answer and goes beyond the scope of this letter.

 Following the Corollaries we obtained in the previous section, when applied to the matter sector of a MAG Theory, they constrain the material sources. Indeed, setting
\beq
\Delta_{(1)}^{\nu}:=\Delta_{\lambda}^{\;\;\lambda\nu}\;, \; \Delta_{(2)}^{\mu}:=\Delta_{\lambda}^{\;\;\mu\lambda}\;, \; \Delta_{(3)\lambda}:=\Delta_{\lambda}^{\;\;\mu\nu}g_{\mu\nu}
\eeq
and applying the Corollaries of the previous section on the Theory $(\ref{SG})$ we are lead to the following conclusions.

\begin{prop}
	If the gravity sector of ($\ref{SG}$) is invariant under connection transformations of the $i-th$ $kind$ ($i=1,2,3$) then this invariance demands a vanishing $i-th$ contraction of the hypermomentum
	\beq
	\Delta_{(i)}^{\mu}=0
	\eeq
\end{prop}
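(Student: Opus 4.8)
The plan is to apply Corollary 1 directly to the gravitational functional $S_G[g,\Gamma]$ and then transport the resulting identity to the matter sector by means of the connection field equation. First I would observe that $S_G[g,\Gamma]$ is itself a scalar density built from the metric, the connection, and their derivatives, i.e. a functional of exactly the form ($\ref{theo}$); hence the Theorem and its Corollary 1 apply to it verbatim, with the generic density $\Psi$ replaced by the gravitational Lagrangian density and $\Psi_{\lambda}^{\;\;\mu\nu}$ replaced by its $\Gamma$-variation. Because the $\Gamma$-variation is carried out at fixed metric, the density factor $\sqrt{-g}$ is an inert, nonvanishing multiplicative scalar, so $\delta S_G/\delta\Gamma^{\lambda}_{\;\;\;\mu\nu} = \sqrt{-g}\, P_{\lambda}^{\;\;\mu\nu}$. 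By hypothesis $S_G$ is invariant under a transformation of the $i$-th kind, so Corollary 1 forces the $i$-th contraction of this $\Gamma$-variation to vanish identically, and after dividing out $\sqrt{-g}$ this reads
\beq
P_{(i)}^{\mu}=0
\eeq

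The second step is to feed this into the connection equation of motion $P_{\lambda}^{\;\;\mu\nu} = \kappa\, \Delta_{\lambda}^{\;\;\mu\nu}$. The $i$-th contraction is the same fixed operation (a trace over a pair of indices for $i=1,2$, or a metric contraction for $i=3$) applied to both sides, so it commutes with the equality and gives $P_{(i)}^{\mu} = \kappa\, \Delta_{(i)}^{\mu}$. Combining this with $P_{(i)}^{\mu}=0$ and dividing by the nonzero coupling $\kappa$ yields the asserted constraint $\Delta_{(i)}^{\mu}=0$, completing the proof.

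Since all of the analytic content has been absorbed into the Theorem, the argument is essentially a matter of bookkeeping, and the only points requiring care are the following. One must check that the metric-dependent factors, namely the density $\sqrt{-g}$ and, for $i=3$, the explicit $g_{\mu\nu}$ in the third contraction, can be pulled cleanly through the $\Gamma$-variation; this is legitimate precisely because $g$ and $\Gamma$ are treated as independent variables, so the metric is annihilated by $\delta_{\Gamma}$. It is also worth emphasizing the logical status of the two statements involved: the vanishing of $P_{(i)}^{\mu}$ is an off-shell geometric identity enforced by the symmetry of the gravitational action alone, whereas the constraint $\Delta_{(i)}^{\mu}=0$ on the hypermomentum holds on-shell, i.e. by virtue of the connection field equation. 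Beyond these observations no genuine obstacle is anticipated.
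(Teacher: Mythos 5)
Your proof is correct and follows essentially the same route as the paper: the paper presents this Proposition as an immediate consequence of Corollary 1 applied to $S_G$ (giving $P_{(i)}^{\mu}=0$ off-shell) combined with the connection field equation $P_{\lambda}^{\;\;\mu\nu}=\kappa\,\Delta_{\lambda}^{\;\;\mu\nu}$, which is exactly your two-step argument. Your added remarks on the inertness of $\sqrt{-g}$ and $g_{\mu\nu}$ under $\delta_{\Gamma}$ and on the off-shell/on-shell distinction are sound refinements of what the paper leaves implicit.
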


\begin{prop}
		If the gravity sector of ($\ref{SG}$) is invariant under general vectorial transformations of the connection of the form ($\ref{vecgen}$), then the hypermomentum tensor has all of its contractions vanishing, namely
	\beq
	\Delta_{(i)}^{\mu}=0\;, \forall i=1,2,3
	\eeq

\end{prop}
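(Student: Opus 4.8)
The plan is to derive Proposition~2 directly from Proposition~1, which was already established as a specialization of the Theorem. First I would recall the defining relation for the hypermomentum, $\Delta_{\lambda}^{\;\;\mu\nu}=-\tfrac{2}{\sqrt{-g}}\frac{\delta S_{M}}{\delta \Gamma^{\lambda}_{\;\;\;\mu\nu}}$, so that $\Delta_{\lambda}^{\;\;\mu\nu}$ plays exactly the role of the $\Gamma$-variation tensor $\Psi_{\lambda}^{\;\;\mu\nu}$ from the Theorem (up to the harmless normalization $-2$ and the density factor). The three contractions $\Delta_{(i)}^{\mu}$ defined just before the Proposition are the literal analogues of the $\Psi_{(i)}^{\mu}$ of Definition~2, so all the machinery transfers verbatim.

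The key observation is that invariance of the \emph{gravity} sector $S_{G}$ under a given connection transformation is equivalent, via the equations of motion, to a constraint on the \emph{matter} sector. I would argue as follows. The general vectorial transformation $(\ref{vecgen})$ is simultaneously a transformation of the first, second, and third kind acting with independent one-forms $\xi_{\mu},\zeta_{\mu},\chi_{\mu}$. By Corollary~2 applied to $S_{G}$, invariance of the gravity sector forces all three contractions of its $\Gamma$-variation $P_{\lambda}^{\;\;\mu\nu}$ to vanish, i.e. $P_{(i)}^{\mu}=0$ for $i=1,2,3$. Then I would invoke the connection field equation $P_{\lambda}^{\;\;\mu\nu}=\kappa\,\Delta_{\lambda}^{\;\;\mu\nu}$, which holds on-shell. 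Contracting this tensor equation in each of the three independent ways and using $P_{(i)}^{\mu}=0$ immediately yields $\Delta_{(i)}^{\mu}=0$ for all $i=1,2,3$, which is the claimed result.

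Equivalently, and perhaps more transparently, I would phrase the whole thing at the level of the total action. Since $S=\tfrac{1}{2\kappa}S_{G}+S_{M}$, invariance of $S_{G}$ under $(\ref{vecgen})$ means the first-order change $\delta_{\Gamma}S_{G}$ vanishes for the specific $N^{\lambda}_{\;\;\mu\nu}=\delta^{\lambda}_{\mu}\xi_{\nu}+\delta^{\lambda}_{\nu}\zeta_{\mu}+\chi^{\lambda}g_{\mu\nu}$. The same computation as in the proof of the Theorem then gives
\begin{gather}
0=\delta_{\Gamma}S_{G}=\int d^{n}x\, P_{\lambda}^{\;\;\mu\nu}\big(\delta^{\lambda}_{\mu}\xi_{\nu}+\delta^{\lambda}_{\nu}\zeta_{\mu}+\chi^{\lambda}g_{\mu\nu}\big) \nonumber \\
=\int d^{n}x\,\big(P_{(1)}^{\nu}\xi_{\nu}+P_{(2)}^{\mu}\zeta_{\mu}+P_{(3)\lambda}\chi^{\lambda}\big).
\end{gather}
Because $\xi,\zeta,\chi$ are arbitrary and mutually independent one-forms, each bracket must vanish separately, giving $P_{(i)}^{\mu}=0$. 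Substituting $P_{\lambda}^{\;\;\mu\nu}=\kappa\Delta_{\lambda}^{\;\;\mu\nu}$ and dividing by $\kappa$ then produces $\Delta_{(i)}^{\mu}=0$ for all $i$.

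The argument is essentially bookkeeping once the Theorem and Corollary~2 are in hand, so I do not anticipate a genuine obstacle. The one point requiring care is the logical direction: the Corollaries speak of invariance of an abstract functional, whereas here the invariance hypothesis is imposed only on $S_{G}$, and the conclusion is drawn about $S_{M}$ through the on-shell identification $P_{\lambda}^{\;\;\mu\nu}=\kappa\Delta_{\lambda}^{\;\;\mu\nu}$. I would therefore be explicit that the step from $P_{(i)}^{\mu}=0$ to $\Delta_{(i)}^{\mu}=0$ uses the connection equation of motion, so the constraint on the hypermomentum is an on-shell statement. I would also note that the independence of $\xi_{\mu},\zeta_{\mu},\chi_{\mu}$ is what separates the three contractions; this is precisely the content already used in passing from Corollary~1 to Corollary~2, so no new idea is needed.
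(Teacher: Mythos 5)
Your proposal is correct and follows essentially the same route as the paper: apply Corollary~2 to the gravity sector $S_{G}$ to force $P_{(i)}^{\mu}=0$ for $i=1,2,3$, then use the connection field equation $P_{\lambda}^{\;\;\mu\nu}=\kappa\,\Delta_{\lambda}^{\;\;\mu\nu}$ to transfer these vanishing traces to the hypermomentum, exactly as the paper intends when it says the Corollaries, ``applied to the Theory,'' constrain the sources. The only cosmetic slip is the omitted $\sqrt{-g}$ density factor in your display (since $P_{\lambda}^{\;\;\mu\nu}$ is defined with a $1/\sqrt{-g}$ normalization), which, as you yourself note, is harmless to the conclusion.
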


\begin{prop}
	If the gravity sector of ($\ref{SG}$) is invariant under constrained vectorial transformations of the form ($\ref{veccons}$), then the traces of the hypermomentum of the Theory satisfy the constraint
	\beq
	\sum_{i=1}^{3}\lambda_{i}\Delta_{(i)}^{\mu}=0
	\eeq
	That is, the three traces of the hypermomentum are  linearly dependent.
\end{prop}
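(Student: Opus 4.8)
The plan is to reduce Proposition 3 to Corollary 3 applied to the purely gravitational functional $S_G$, and then to transport the resulting constraint onto the hypermomentum by means of the connection field equation. Since $S_G[g,\Gamma]$ is a functional of exactly the type $J$ appearing in the Theorem, its $\Gamma$-variation $P_{\lambda}^{\;\;\mu\nu}=\frac{1}{\sqrt{-g}}\frac{\delta S_G}{\delta \Gamma^{\lambda}_{\;\;\;\mu\nu}}$ plays the role of $\Psi_{\lambda}^{\;\;\mu\nu}$. I would first invoke Corollary 3 directly: the assumed invariance of $S_G$ under the constrained vectorial transformation $(\ref{veccons})$ yields the identity $\sum_{i=1}^{3}\lambda_{i}P_{(i)}^{\mu}=0$, where the $P_{(i)}^{\mu}$ are the three contractions of $P_{\lambda}^{\;\;\mu\nu}$ built exactly as the $\Psi_{(i)}^{\mu}$ of Definition 2.

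For completeness I would re-derive this identity explicitly, so as to fix the index bookkeeping on which the final step relies. Inserting $\delta\Gamma^{\lambda}_{\;\;\mu\nu}=\lambda_{1}\delta^{\lambda}_{\mu}v_{\nu}+\lambda_{2}\delta^{\lambda}_{\nu}v_{\mu}+\lambda_{3}v^{\lambda}g_{\mu\nu}$ into $\delta S_G=\int d^{n}x\,\sqrt{-g}\,P_{\lambda}^{\;\;\mu\nu}\,\delta\Gamma^{\lambda}_{\;\;\mu\nu}$ and carrying out the contractions of the two Kronecker deltas and the metric, the three terms collapse precisely onto $P_{(1)}^{\mu}$, $P_{(2)}^{\mu}$ and $P_{(3)}^{\mu}$, giving $\delta S_G=\int d^{n}x\,\sqrt{-g}\,v_{\mu}\big(\lambda_{1}P_{(1)}^{\mu}+\lambda_{2}P_{(2)}^{\mu}+\lambda_{3}P_{(3)}^{\mu}\big)$. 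Demanding that this vanish for an arbitrary one-form $v_{\mu}$ reproduces $\sum_{i}\lambda_{i}P_{(i)}^{\mu}=0$.

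The final step passes from $P$ to $\Delta$. Here I would use the connection equation of motion $P_{\lambda}^{\;\;\mu\nu}=\kappa\,\Delta_{\lambda}^{\;\;\mu\nu}$ obtained by varying $(\ref{SG})$. Because the three contraction operations of Definition 2 are linear and act identically on both index structures, applying them to the field equation gives $P_{(i)}^{\mu}=\kappa\,\Delta_{(i)}^{\mu}$ for each $i=1,2,3$. Substituting into the identity of the previous step yields $\kappa\sum_{i=1}^{3}\lambda_{i}\Delta_{(i)}^{\mu}=\sum_{i=1}^{3}\lambda_{i}P_{(i)}^{\mu}=0$, and since $\kappa\neq 0$ the constraint $\sum_{i=1}^{3}\lambda_{i}\Delta_{(i)}^{\mu}=0$ follows, establishing that the three traces of the hypermomentum are linearly dependent. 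As a consistency check, setting only one $\lambda_{i}$ nonzero should recover Proposition 1, and taking all three independent should recover Proposition 2.

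I expect the only genuinely delicate point to be conceptual rather than computational. The identity $\sum_{i}\lambda_{i}P_{(i)}^{\mu}=0$ is an off-shell statement about the gravitational sector alone, whereas the analogous relation for $\Delta$ holds only on-shell, i.e.\ once the connection field equation is imposed; I would therefore state explicitly that the hypermomentum constraint is a property of solutions of the theory. The remaining care is bookkeeping: one must verify that the scalar factors $\sqrt{-g}$ and $\kappa$ entering the definitions of $P$, $\Delta$ and the field equation cancel consistently, so that the contractions on the two sides of $P_{\lambda}^{\;\;\mu\nu}=\kappa\,\Delta_{\lambda}^{\;\;\mu\nu}$ coincide index for index.
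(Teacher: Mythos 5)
Your proposal is correct and follows essentially the same route as the paper: the paper obtains the propositions by applying the Corollaries to the gravitational sector (here Corollary 3, giving $\sum_{i}\lambda_{i}P_{(i)}^{\mu}=0$ identically) and then transferring this constraint onto the hypermomentum through the connection field equation $P_{\lambda}^{\;\;\mu\nu}=\kappa\,\Delta_{\lambda}^{\;\;\mu\nu}$. Your explicit index bookkeeping and the remark that the hypermomentum constraint is a consistency condition holding on-shell only make precise what the paper leaves implicit.
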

Note that special cases of the above give the results of $Proposition$ $1$. In addition for the parameter choice $\lambda_{1}=\lambda_{2}=-\lambda_{3}$ one obtains invariance under a  Weyl transformation of the connection. Then if the Theory is invariant under Weyl transformations, the hypermomentum traces satisfy
\beq
\Delta_{(1)}^{\mu}+\Delta_{(2)}^{\mu}-\Delta_{(3)}^{\mu}=0
\eeq
We shall call such type of matter, a $Weyl$-matter. It would be interesting to study Theories with such matter fields and also see if these kind of constraints are purely unphysical or are indeed obeyed by some certain physical systems. Certainly, this issue deserves further investigation. If it turns out that all of these constraints are unphysical, our result could be applied to rule out Gravitational sectors of Metric-Affine Theories that respect certain symmetries of the connection. We should note that much milder constraints are put on the hypermomentum traces if the Gravitational action is invariant under special vectorial transformations of the connection (i.e the one forms of the transformation are exact, for instance $\xi_{\mu}=\partial_{\mu}\lambda$). Indeed, if the action $(\ref{SG})$ is invariant under what Einstein called, $\lambda$ (special projective) transformations
\beq
\Gamma^{\lambda}_{\;\;\mu\nu}\longrightarrow \Gamma^{\lambda}_{\;\;\mu\nu}+\delta^{\lambda}_{\mu}\partial_{\nu}\lambda
\eeq
Then, up to surface terms, we obtain
\beq
\partial_{\mu}(\sqrt{-g}\Delta^{\mu}_{(1)})=0
\eeq
which has the form of a conservation law (see also \cite{ponomariov1982generalized,iosifidis2019scale}). Obviously, for special vectorial transformations all of the above propositions change accordingly, and the constraints are imposed not on the traces of the hypermomentum, but rather on their divergences. For instance, if the Gravitational action is invariant under special constrained vectorial transformations of the form
\beq
\Gamma^{\lambda}_{\;\;\mu\nu}\longrightarrow \Gamma^{\lambda}_{\;\;\mu\nu}+c_{1}\delta^{\lambda}_{\mu}\partial_{\nu}\lambda+c_{2}\delta^{\lambda}_{\nu}\partial_{\mu}\lambda+c_{3}g_{\mu\nu}\partial^{\lambda}\lambda 
\eeq 
Then the  divergences of the traces of the hypermomentum tensor, obey
	\beq
\sum_{i=1}^{3}c_{i}\partial_{\mu}(\sqrt{-g}\Delta_{(i)}^{\mu})=0
\eeq
Note that the above milder constraints have a much more pleasing appeal then their stronger counterparts discussed above. Again, it would be very interesting to investigate the physical consequences of these constraints and comment upon their viability. This would be a topic to be pursued elsewhere.

\section{Conclusions}
Non-Riemannian Geometries form the geometric arena in which Metric-Affine Theories of Gravitation live. In this extended geometry, the linear connection and the metric are independent fields and the space is also endowed with torsion and non-metricity on top of curvature. In this context the linear connection and the metric can be transformed independently and is important to have a tool that allows one to explore the invariance of the action (or a functional in general) under a certain group of transformations. 

In this paper we presented and proved a Theorem that allows one to obtain invariant functionals (or specifically invariant  actions if we are dealing with Gravity) under certain transformations of the affine connection. More specifically, we showed  that if a functional is invariant under a certain transformation of the affine connection, its $\Gamma$-variation produces a tensor of a particular symmetry. In addition, the nice feature of our Theorem is that it also works backwards, that is if the tensor obtained by the $\Gamma$-variation of the functional has a certain symmetry then the initial functional is invariant under a certain transformation of the affine connection. An immediate consequence of the Theorem is that if a functional is invariant under connection transformations of the $i-$th kind\footnote{See definitions at section \textbf{IV}.} ($i=1,2,3$) then the  $i-th$ contraction of its $\Gamma-$variation vanishes and conversely. It also follows that if all of the contractions of the $\Gamma$-variation of a functional vanish, the functional is invariant under the general vectorial transformation ($\ref{vecgen}$) of the connection. Additionally, if  the three traces of of the $\Gamma$-variation of the functional are linearly dependent then the later is invariant under a constraint vectorial transformation of the connection of the form $(\ref{veccons})$.

Even though up to this point all the above seem to be only of mathematical interest,we should note that the results find a natural place of application in Metric-Affine Gravity.
Indeed, as we showed, when the functional is taken to represent a Gravitational (or matter) action the Theorem can be applied to give us information about the symmetries of the action. As an example of the above we reproduced and also extended the results found in \cite{iosifidis2019scale} regarding the most general quadratic projective invariant Theory. In addition, we found the most general quadratic (in torsion and non-metricity) whose $\Gamma$-variation produces a completely traceless tensor in all of its contractions. We then showed how the invariance of a MAG Gravitational action under certain transformations of the connection is related to constraints imposed on the hypermomentum tensor. Considering vectorial  (and also special) connection transformations we then obtained the constraints imposed on the hypermomentum.

\section{Acknowledgments}
I would like to thank Jose Beltran Jimenez for useful discussions and comments and also Tomi Sebastian
Koivisto for some remarks. I am also grateful to the anonymous referees for their helpful critique and suggestions. This research is co-financed by Greece and the European Union (European Social Fund- ESF) through the
Operational Programme «Human Resources Development, Education and Lifelong Learning» in the context
of the project “Reinforcement of Postdoctoral Researchers - 2
nd Cycle” (MIS-5033021), implemented by the
State Scholarships Foundation (IKY).

\bibliographystyle{unsrt}
\bibliography{ref}

\end{document}